\documentclass[journal,twoside,web]{ieeecolor}

\usepackage{generic}
\usepackage{amsmath,amssymb,amsfonts}
\usepackage{algorithmic}
\usepackage[caption=false,font=footnotesize,labelformat=simple]{subfig}
\usepackage{graphicx}
\usepackage{textcomp}

\usepackage{epsfig} 
\usepackage{lcsys}
\usepackage{booktabs}
\usepackage{multirow}

\usepackage{bm}  
\usepackage[hidelinks]{hyperref}
\usepackage{mathtools} 

\usepackage[style=ieee,citestyle=numeric-comp,]{biblatex} 
\AtEveryBibitem{\clearfield{url}}
\AtEveryBibitem{\clearfield{doi}}
\AtEveryBibitem{\clearfield{isbn}}
\AtEveryBibitem{\clearfield{issn}}
\AtBeginBibliography{\small}

\DeclareMathOperator*{\argmin}{arg\,min}

\usepackage{amsthm}

\newtheoremstyle{boldstyle}%
{}
{}
{\itshape}
{}
{\bfseries}
{.}
{.5em}
{\thmname{#1}\thmnumber{ #2}\thmnote{ (#3)}}
\theoremstyle{boldstyle}
\newtheorem{theorem}{Theorem}[section]

\newtheorem{lemma}[theorem]{Lemma}
\newtheorem{proposition}[theorem]{Proposition}

\newtheorem{remark}{Remark}

\usepackage{eso-pic}

\newcommand{\publishernotetext}{%
	\textcopyright{} 2026 IEEE. Personal use of this material is permitted.  Permission from IEEE must be obtained for all other uses, in any current or future media, including reprinting/republishing this material for advertising or promotional purposes, creating new collective works, for resale or redistribution to servers or lists, or reuse of any copyrighted component of this work in other works.%
}

\newcommand{\copyrightnotice}{%
	\AddToShipoutPictureFG*{%
		\AtPageLowerLeft{%
			\raisebox{7mm}{%
				\makebox[\paperwidth][c]{%
					\fcolorbox{black}{white}{%
						\parbox{0.842\paperwidth}{%
							\centering\footnotesize
							\publishernotetext
						}%
					}%
				}%
			}%
		}%
	}%
}

\graphicspath{ {./Pics/} }

\begin{document}

\copyrightnotice

\def\BibTeX{{\rm B\kern-.05em{\sc i\kern-.025em b}\kern-.08em
    T\kern-.1667em\lower.7ex\hbox{E}\kern-.125emX}}

\title{Where to Put Safety? Control Barrier Function Placement in Networked Control Systems}

\author{Severin Beger, Yuling Chen, and Sandra Hirche
\thanks{The authors would like to thank the Federal Ministry of Research, Technology, and Space (BMFTR) for its support as part of the research program Communication Systems “Souverän. Digital. Vernetzt.”. Joint project 6G-life, project identification number: 16KIS2414}
\thanks{All authors are with the Chair of Information-oriented Control, TU Munich, 80333 Munich, Germany (E-mail: \{Severin.Beger\},\{Yuling1.Chen\}, \{Hirche\}@tum.de). }
}

\maketitle
\thispagestyle{empty}

\begin{abstract}
Control barrier functions (CBFs) are widely used to enforce safety in autonomous systems, yet their placement within networked control architectures remains largely unexplored.
In this work, we investigate where to enforce safety in a networked control system in which a remote model predictive controller (MPC) communicates with the plant over a delayed network. We compare two safety strategies: i) a local myopic CBF filter applied at the plant and ii) predictive CBF constraints embedded in the remote MPC. For both architectures, we derive state-dependent error tolerance bounds and show that safety placement induces a fundamental trade-off: local CBFs provide higher disturbance tolerance due to access to fresh state measurements, whereas MPC-CBFs achieve better performance through anticipatory behavior, but yield stricter admissible disturbance levels. Motivated by this insight, we propose a combined architecture that integrates predictive and local safety mechanisms. The theoretical findings are illustrated in simulations on a planar three-degree-of-freedom robot performing a collision-avoidance task.
\end{abstract}
\begin{IEEEkeywords}
Control Barrier Functions, Control System Architecture, Networked Control Systems
\end{IEEEkeywords}

\section{Introduction}
\label{sec:introduction}
\IEEEPARstart{S}{afety} is paramount for cyber-physical systems such as autonomous vehicles operating in public spaces, or robots interacting with humans. Control barrier functions (CBFs) have  emerged as a powerful framework for enforcing safety constraints by guaranteeing forward invariance of safe sets while minimally modifying nominal control inputs \cite{Wieland.2007, A.D.Ames.2017, A.D.Ames.2019}.\\
CBFs are commonly implemented as myopic, single-step safety filters acting at the plant. While effective in ensuring safety, purely local safety filters may lead to undesirable behavior such as late obstacle avoidance, loss of feasibility, or deadlock situations \cite{K.P.Wabersich.2023,A.Singletary.2021}. To address these issues, recent work has integrated CBF constraints directly into model predictive control (MPC) formulations \cite{J.Zeng.2021}, allowing safety to be considered within a predictive optimization framework.\\
In many modern cyber-physical systems, including cloud robotics, autonomous vehicle fleets, and teleoperated robots, control is implemented in a networked architecture \cite{Matni.2024}, where computationally intensive planning or optimization is performed remotely and low-level feedback control is executed locally. While remote optimization enables complex predictive planning, it relies on delayed or predicted state information due to communication constraints \cite{Beger.2024}.
This raises a fundamental architectural question: where should safety mechanisms such as CBFs be placed in a networked control system (NCS)? \\
CBFs can either be enforced locally at the plant side, benefiting from fresh measurements but acting myopically, or embedded in a remote predictive controller, enabling anticipatory behavior but relying on predicted states affected by network delay and disturbances. While recent work has considered potential architectures for CBF in networked \cite{R.Periotto.2024} or multi-rate \cite{U.Rosolia.2021} setups, the architectural implications of enforcing safety locally or remotely have not yet been systematically analyzed.\\
In this paper, we investigate the placement of CBF-based safety filters in a networked control architecture consisting of a remote MPC and a locally executed controller. More specifically, we compare two safety strategies: i) a local myopic CBF filter applied at the plant and ii) predictive CBF constraints embedded in the remote MPC. We leverage the fact that a discrete-time CBF constraint is evaluated using a model-predicted successor state. Thus, its safety certificate is only as reliable as the prediction on which it is based. To this end, we derive a prediction-error tolerance condition for discrete-time CBFs, i.e., the amount of mismatch between predicted and true states that can be absorbed while still certifying safety, and specialize it to local and remote CBF placement subject to disturbances. Through formal comparison, we reveal an architectural trade-off between disturbance tolerance and predictive performance. Based on these findings, we propose a hybrid safety architecture combining predictive and local safety mechanisms.
Our results are illustrated in simulations on a planar three-degrees-of-freedom robot performing a collision-avoidance task.\\
The paper is organized as follows. Sec. \ref{sec:problem} introduces the problem setup. Sec. \ref{sec:ErrorMargin} introduces a prediction error margin for discrete CBFs, which we use in Sec. \ref{sec:theory} to compare the specific placement of CBFs in an NCS. In Sec. \ref{sec:CombinedArchitecture}, we propose an approach to combine the predictive and myopic safety components before we present simulation results in Sec. \ref{sec:sim}. Finally, we conclude the work in Sec. \ref{sec:conclusion}.

\section{Problem Statement and Preliminaries}
\label{sec:problem}
In this work, we consider discrete-time, nonlinear dynamical systems subject to additive disturbance 
\begin{align}
\label{eq:dynamics}
   \bm{x}_{k+1} = \bm{f}(\bm{x}_k,\bm{u}_k)+\bm{w}_k, \quad k\in\mathbb{N}
\end{align}
with states $\bm{x}_k \in \mathbb{R}^n$, inputs $\bm{u}_k \in \mathbb{R}^m$, and disturbances $\bm{w}_k\in \mathcal{W}\subset \mathbb{R}^n$ that satisfy $\lVert\bm{w}_k\rVert \leq \bar{w}$. $\bm{f}(\bm{x},\bm{u}): \mathbb{R}^{n\times m} \to \mathbb{R}^n$ is locally Lipschitz with constants $L_x$ and $L_u$. 
All states and inputs are subject to admissible sets $\bm{x}_k \in \mathcal{X}$ and $\bm{u}_k \in\mathcal{U}$ such that $\mathcal{U}\coloneqq\{\bm{u} \in \mathbb{R}^m:\lVert\bm{u}\lVert \leq\bar{u}\}$ for all $k$.\\
A communication network between the controller and the plant introduces a measurement-to-actuation latency $\tau_k = \hat{\tau} + r_{k} \in \mathbb{N}$. While physical network delays can be decomposed into sensor-to-controller delay $\tau_{sc}$, computational delay $\tau_c$ and controller-to-actuator delay $\tau_{ca}$, we consider a compensation-oriented decomposition (cf. Fig. \ref{fig:TimingOverview}). We differentiate a known, constant delay part $\hat{\tau} \in \mathbb{N}$ and an unknown, uncompensated delay residual $r_{k} \in \mathbb{N}_0$ upper-bounded by $0 \leq r_k \leq \bar{r}$. Accordingly, we distinguish the following three points of time in the network: the time of a measurement $k-\hat{\tau}$, the predicted time of control application $k$, and the true control application time $k+r_k$. All time stamps are taken at the plant. \\
\begin{figure}[t]
\centering
\includegraphics[width=\linewidth]{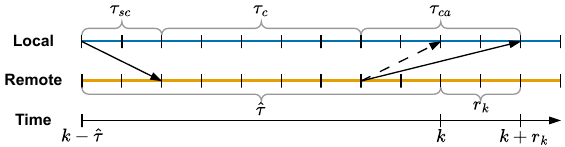}
\caption {Timing of the considered networked communication. Solid arrows symbolize the actual timing of sent messages, while the dashed arrow represents a predicted timing.}
\label{fig:TimingOverview}
\end{figure}
 On the remote side of the system (cf. Fig. \ref{fig:SystemOverview}), a networked predictive controller (NPC) receives a delayed state measurement $\bm{x}_{k-\hat{\tau}}$. To compensate for the delay, a predictor rolls out the nominal dynamics
\begin{align}
\label{eq:nominalDynamics}
    \hat{\bm{x}}_{k+1} =\bm{f}(\hat{\bm{x}}_k,\bm{u}_k)
\end{align}
initialized with $\bm{x}_{k-\hat{\tau}}$ and the last $\hat{\tau}$ buffered inputs from the remote buffer $\mathcal{B}_\mathrm{r}$.
The resulting state estimate $\hat{\bm{x}}_{k}$ serves as the initial value $\bm{x}_{0|k}=\hat{\bm{x}}_{k}$ for the optimal control problem (OCP) of an MPC
\begin{subequations}
\label{eq:OCP}
    \begin{minipage}{0.98\linewidth}
    \vspace{0.8ex}
    \hrule
    \vspace{0.8ex}
    \textbf{Nominal MPC:}
    \vspace{0ex}
        \begin{align}
                J= &\min_{\bm{u}_{0:N-1|k}} p(\bm{x_{N|k}})+\sum_{i=0}^{N-1}q(\bm{x}_{i|k},\bm{u}_{i|k})\\
              \textrm{s.t. }      
             &\bm{x}_{i+1|k} = \bm{f}(\bm{x}_{i|k},\bm{u}_{i|k}),\,\label{eq:OCP-dynConst}\\
              &\bm{x}_{i|k} \in \mathcal{X}, \quad \bm{x}_{0|k} = \hat{\bm{x}}_{k}, \quad\bm{x}_{N|k} \in \mathcal{X}_f\label{eq:OCP-stateConst},\\
              &\bm{u}_{i|k} \in \mathcal{U},\quad \lVert \bm{u}_{i+1|k}-\bm{u}_{i|k}\rVert \leq \Delta u,\label{eq:OCP-inputConst}\\
            &\forall i=\{0,\ldots,N-1\} \nonumber
            ,
        \end{align}
    \vspace{-2.5ex}
    \hrule
    \vspace{1.5ex}
    \end{minipage}
\end{subequations}
where $\Delta u$ is a rate constraint, and $\mathcal{X}_f$ the terminal set. $p(\bm{x}_N)\in\mathbb{R}^+$ denotes the terminal costs at prediction horizon $N\in\mathbb{N}_{>0}$, while $q(\bm{x}_{i|k},\bm{u}_{i|k}) \in \mathbb{R}^+$ represents the stage costs. $\bm{x}_{i|k}$ denotes the predicted state vector at time step $k+i$ based on the initial value $\hat{\bm{x}}_k$ and applying the optimal input sequence $\bm{u}_{0:N-1|k}$ to the nominal dynamics \eqref{eq:nominalDynamics}. Note that $\hat{\bm{x}}_{k+l}=\bm{x}_{l|k}$. Feasibility of the problem is enforced through standard choices for the terminal ingredients \cite{Mayne.2014}.\\ 
The first entry of the resulting input array 
\begin{align}
\label{eq:MPCControllaw}
    \bm{u}^\mathrm{MPC}_{k}=\bm{u}_{0|k}(\hat{\bm{x}}_{k}).
\end{align}
is buffered in $\mathcal{B}_\mathrm{r}$, and sent to the plant side.\\
\begin{figure}[t]
\centering
\includegraphics[width=.7\linewidth]{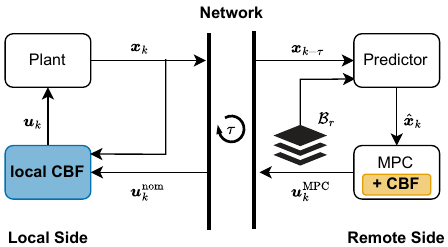}
\caption{Considered networked setup with potential CBF placements locally or remotely within the MPC.}
\label{fig:SystemOverview}
\end{figure}
The main concern of this work is to render the system safe with respect to a safe set 
\begin{align}
\label{eq:SafeSet}
    \mathcal{C}=\{\bm{x}_k\in\mathcal{X}\subset\mathbb{R}^n:h(\bm{x}_k)\geq0\}.
\end{align}
where $h(\bm{x}_k): \mathcal{X} \to \mathbb{R}$ is the Lipschitz continuous barrier function with constant $L_h$. We consider the discrete-time CBF condition, as introduced in \cite{Agrawal.2017}, 
\begin{align}
\label{eq:CBFCondition}
    h(\bm{x}_{k+1})-h(\bm{x}_k) \geq -\gamma(h(\bm{x}_k)), \,\,\gamma \in \mathcal{K}^e_\infty
\end{align}
to enforce forward invariance of the safe set.
For the sake of simplicity, we treat $\gamma$ in the rest of this article as a scalar $0<\gamma\leq1$. This results in the well-known exponential CBF, as \eqref{eq:CBFCondition} can be rearranged to give $h(\bm{x}_{k+1})\geq (1-\gamma)h(\bm{x}_k)$, which describes exponential decay of the barrier function.\\
But where do we enforce the set invariance? Either, we incorporate the CBF certificate \eqref{eq:CBFCondition} as an additional constraint into the MPC problem \eqref{eq:OCP}:
\begin{equation}
    \begin{minipage}{0.98\linewidth}
    \vspace{0.8ex}
    \hrule
    \vspace{0.8ex}
    \textbf{Remote MPC-CBF:}
    \vspace{0ex}
            \begin{align}
            \label{eq:MPC-CBF} 
              &\textrm{Problem \eqref{eq:OCP} with additional constraints:} \\
            &h(\bm{x}_{i+1|k})-h(\bm{x}_{i|k}) \geq -\gamma h(\bm{x}_{i|k}),
            \label{eq:mpc-cbf-barrier}\nonumber\\
            & \forall i=\{0,\ldots,N-1\}. \nonumber
            \end{align}
    \vspace{-2.5ex}
    \hrule \nonumber
    \vspace{1.5ex}
    \end{minipage}
\end{equation}
Thus, we consider the safety in a predictive manner at the downside of using outdated state measurements. Or alternatively, we place a myopic filter at the local side close to the plant, which acts on the most recent state measurements and alters the nominal input from the MPC $\bm{u}^\mathrm{nom}_k= \bm{u}^\mathrm{MPC}_k$ before applying it to the plant to enforce safety:
\begin{subequations}
\label{eq:localCBF} \nonumber
    \begin{minipage}{0.98\linewidth}
    \vspace{0.8ex}
    \hrule
    \vspace{0.8ex}
    \textbf{Local CBF:}
    \vspace{1.2ex}
            \begin{align}
            \bm{u}_k = &\argmin_{\bm{u}} \,\frac{1}{2}\lVert \bm{u}-\bm{u}^\mathrm{nom}_k \rVert^2
            \label{eq:localCBF-costs}\\
           &\textrm{s.t. } h(\hat{\bm{x}}_{k+1})-h(\bm{x}_k) \geq -\gamma h(\bm{x}_k),
            \label{eq:localCBF-barrierconstraint}\\
            &\bm{u}_k \in \mathcal{U}.
            \label{eq:localCBF-inputconstraints}
            \end{align}
    \vspace{-2.5ex}
    \hrule
    \vspace{1.5ex}
    \end{minipage}
\end{subequations}
The different placements are depicted in Fig. \ref{fig:SystemOverview}.\\
Note that either way, the resulting optimization problem is in general nonlinear, even for control-affine dynamics, due to the inclusion of the dynamic flow in the control barrier function in the discrete setting \cite{Agrawal.2017}. Note that the tuning of $\gamma$ plays a crucial role \cite{J.Zeng.2021}, as it affects the conservativeness of the discrete CBF and its compatibility with the feasible set of the MPC.\\
To analyze how these two architectures trade off in terms of performance and safety under disturbance, we introduce the viewpoint of prediction error margin in the following section.

\section{Error Margin of Discrete CBFs}
\label{sec:ErrorMargin}
When applying the CBF condition \eqref{eq:CBFCondition} to choose input $\bm{u}_k$, the state $\bm{x}_{k+1}$ is not available. Hence, the CBF condition is evaluated using the predicted state $\hat{\bm{x}}_{k+1}$. In a networked predictive architecture, the state from which the CBF condition is initialized may also be predicted. Consequently, the CBF certificate is computed from predicted states, while the safety has to hold for the true states.
Let $\bm{e}_k = \bm{x}_k - \hat{\bm{x}}_k$ denote the prediction error for step $k$. Then, using Lipschitz continuity of $h$, we can lower bound $h(\bm{x}_{k+1})$ as
\mbox{$h(\bm{x}_{k+1})\geq h(\hat{\bm{x}}_{k+1})-L_h \lVert \bm{e}_{k+1} \lVert$}
while we can upper bound $h(\bm{x}_k)$ as
\mbox{$h(\bm{x}_{k}) \leq h(\hat{\bm{x}}_{k})+L_h \lVert \bm{e}_{k} \lVert$}.
Applying both bounds to the CBF condition \eqref{eq:CBFCondition} gives
\begin{align}
    \label{eq:PredictionErrorCondition}
    \frac{h(\hat{\bm{x}}_{k+1})-(1-\gamma)h(\hat{\bm{x}}_k)}{L_h} \geq \lVert\bm{e}_{k+1}\lVert+(1-\gamma)\lVert\bm{e}_k\lVert.
\end{align}
As long as the loss caused by prediction errors is smaller than the prediction-error margin  
\begin{align}
\label{eq:PredErrorTolerance}
    \Delta_k(\hat{\bm{x}}_k,\bm{u}_k) = \frac{h(\hat{\bm{x}}_{k+1})-(1-\gamma)h(\hat{\bm{x}}_k)}{L_h}.
\end{align}
the CBF condition holds for the true state.
Relating $\Delta_k$ to the error generated by a specific architecture yields the corresponding \emph{disturbance tolerance}. Thus, tolerance increases with the available CBF margin and decreases with the barrier function's sensitivity $L_h$.
\begin{remark}[Sources of Prediction Errors]
The errors $\bm{e}_{k}$ and $\bm{e}_{k+1}$ collect all deviations between the states used in the CBF constraint and the true states. In this work, these deviations result from additive disturbances, delay-compensating prediction, and uncompensated delay. Other effects, not considered here for simplicity, can be included in the same way.
\end{remark}
This viewpoint of prediction error margin lets us separate the CBF certificate from the architecture that generates the prediction errors. The local and remote safety placements differ precisely in these errors. We now specialize the tolerance condition to the two architectures considered in this paper.

\section{Disturbance Tolerance of Locally and Remotely Placed CBFs}
\label{sec:theory}
In the following sections, we explore the inherent error tolerance of the considered architectures with respect to the additive disturbance bound while preserving safety, and examine how they trade off. We start with the local, myopic version.
\subsection{Local CBF}
We define the available CBF margin for a myopic filter with access to the current state $\bm{x}_k$ as
\begin{align}\label{eq:localCBFPredictionError}
        \Delta_k^{\mathrm{loc}}(\bm{x}_k, \bm{u}_k)\coloneqq \frac{h(\hat{\bm{x}}_{k+1})-(1-\gamma)h(\bm{x}_k)}{L_h}.
\end{align}
\begin{lemma}[Disturbance Tolerance of Local CBF]
\label{lem:MyopiCBF}
Consider system \eqref{eq:dynamics} with bounded disturbance $\lVert\bm{w}_k\rVert \leq \bar{w}$, safe set
$\mathcal C$ \eqref{eq:SafeSet}, and a local CBF \eqref{eq:localCBF} with access to the current state $\bm{x}_k$ and the nominal input $\bm{u}^\mathrm{nom}_k$. Let $h(\cdot)$ be Lipschitz continuous with constant $L_h$ and $h(\bm{x}_0)\geq0$.
If, for every closed-loop state $x_k\in\mathcal C$, the local CBF is feasible and returns $u_k\in\mathcal U$ such that
\begin{align}
\label{eq:myopicNoiseCondition}
      \bar{w} \leq \bar{w}_l(\bm{x}_k,\bm{u}_k)\coloneqq\Delta_k^{\mathrm{loc}}(\bm{x}_k, \bm{u}_k)
\end{align}
then it renders the safe set $\mathcal{C}$ forward invariant.
\end{lemma}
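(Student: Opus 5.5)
The argument is an induction on $k$ that specializes the prediction-error condition \eqref{eq:PredictionErrorCondition} to the local placement. The base case is $h(\bm{x}_0)\geq 0$, i.e.\ $\bm{x}_0\in\mathcal C$. For the inductive step, assume $\bm{x}_k\in\mathcal C$. By hypothesis the local CBF \eqref{eq:localCBF} is then feasible and returns $\bm{u}_k\in\mathcal U$ satisfying \eqref{eq:myopicNoiseCondition}. We must show $h(\bm{x}_{k+1})\geq 0$. The stronger CBF inequality \eqref{eq:CBFCondition} along the true trajectory is not needed.

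First we identify the prediction errors of the local architecture. Since the filter measures the current state directly, the state entering the constraint \eqref{eq:localCBF-barrierconstraint} is exact, so $\bm{e}_k=\bm 0$. The predicted successor is $\hat{\bm{x}}_{k+1}=\bm f(\bm{x}_k,\bm{u}_k)$. If we treat the delay-free local loop, $\bm{u}_k$ is the input actually applied at step $k$. Hence \eqref{eq:dynamics} gives $\bm{e}_{k+1}=\bm{x}_{k+1}-\hat{\bm{x}}_{k+1}=\bm{w}_k$, and therefore $\lVert\bm{e}_{k+1}\rVert\leq\bar w$.

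Next we apply Lipschitz continuity of $h$ to the successor alone, avoiding the symmetric bound used for \eqref{eq:PredictionErrorCondition}:
\begin{align*}
h(\bm{x}_{k+1})\geq h(\hat{\bm{x}}_{k+1})-L_h\lVert\bm{w}_k\rVert\geq h(\hat{\bm{x}}_{k+1})-L_h\bar w.
\end{align*}
We then use \eqref{eq:myopicNoiseCondition}, i.e.\ $L_h\bar w\leq h(\hat{\bm{x}}_{k+1})-(1-\gamma)h(\bm{x}_k)$. Substituting gives
\begin{align*}
h(\bm{x}_{k+1})\geq (1-\gamma)h(\bm{x}_k)\geq 0,
\end{align*}
where the last step uses $0<\gamma\leq 1$ and $h(\bm{x}_k)\geq 0$. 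This is exactly \eqref{eq:CBFCondition} evaluated on the true state, and it yields $\bm{x}_{k+1}\in\mathcal C$. Induction then gives forward invariance of $\mathcal C$.

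The main obstacle is conceptual rather than computational. We have to justify that $\bm{e}_k=\bm 0$ and that $\bm{e}_{k+1}$ reduces to $\bm{w}_k$ alone. This requires the filter's model-based prediction $\hat{\bm{x}}_{k+1}$ to use the same $\bm{u}_k$ that is applied. That holds because the filter sits at the plant, after the network, so the uncompensated residual $r_k$ only affects which nominal input $\bm{u}^{\mathrm{nom}}_k$ arrives, not the input the filter certifies. A secondary point is that \eqref{eq:myopicNoiseCondition} being checked at the returned $\bm{u}_k$ implicitly requires $\Delta^{\mathrm{loc}}_k\geq 0$. This is consistent with the constraint \eqref{eq:localCBF-barrierconstraint}, and no further feasibility argument is needed since feasibility is assumed.
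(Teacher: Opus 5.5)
Your proof is correct and follows essentially the paper's route: for the local filter you set $\bm{e}_k=\bm 0$ and $\bm{e}_{k+1}=\bm{w}_k$, and you apply the Lipschitz bound (which is just \eqref{eq:PredictionErrorCondition} with $\bm{e}_k=\bm 0$) together with \eqref{eq:myopicNoiseCondition} to get $h(\bm{x}_{k+1})\geq(1-\gamma)h(\bm{x}_k)\geq 0$, then conclude recursively from $h(\bm{x}_0)\geq 0$. Your explicit induction on $\bm{x}_k\in\mathcal C$ and your remark that $r_k$ only affects which $\bm{u}^{\mathrm{nom}}_k$ arrives, not the certified $\bm{u}_k$, spell out what the paper leaves implicit.
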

\begin{proof}
    $\Delta_k^\mathrm{loc}(\bm{x}_k, \bm{u}_k)$ follows from \eqref{eq:PredErrorTolerance} and $\hat{\bm{x}}_k=\bm{x}_k$ in the local case. The latter condition also results in $\bm{e}_k=\bm{0}$. Therefore, the one-step prediction error
    $\bm{e}_{k+1} = \bm{w}_k$
    provides the upper bound $\lVert \bm{w}_k \lVert \leq \bar{w}$ for the tolerable prediction error. Thus, if \eqref{eq:myopicNoiseCondition} is true, then \eqref{eq:PredictionErrorCondition} with $\bm{e}_k=0$ and $\hat{\bm{x}}_k = \bm{x}_k$ holds, and therefore
$h(\bm{x}_{k+1})\geq(1-\gamma)h(\bm{x}_k)$. Since $\bm{x}_0\in\mathcal C$ and $0< \gamma \leq 1$, recursive application gives $h(x_k)\geq 0 \; \forall k$.
\end{proof}
The bound decreases near the boundary $\partial\mathcal C$, where the nominal successor must move sufficiently inside the safe set to compensate the disturbance-induced barrier loss. Robust safety can be enforced by tightening the CBF constraint with $L_h \bar{w}$ (cf. \cite{Liu.2026}).

\subsection{Remote MPC-CBF}
Next, we examine the tolerated state-dependent disturbance bound of the remote MPC with CBF constraints \eqref{eq:MPC-CBF}, where $\bm{u}_k = \bm{u}_k^{\mathrm{MPC}}$, since no local filter alters the inputs.
Let \mbox{$S_r:=\sum_{j=0}^{r-1}L_x^j$}, \mbox{$d_u:=L_u\bar{r}\Delta_u$}, and \mbox{$G(\hat{\tau},l)=S_{\hat{\tau}+l}+(1-\gamma)S_{\hat{\tau}+l-1}$}. We define the available CBF margin of \eqref{eq:MPC-CBF} recursively along a horizon as 
\begin{align}\label{eq:MPCCBFPredictionError}
        \Delta^\mathrm{rem}_{k+l}(\hat{\bm{x}}_k,\bm{u}_k,\hat{\tau},l)\coloneqq \frac{h(\hat{\bm{x}}_{l+k})-(1-\gamma) h(\hat{\bm{x}}_{l-1+k})}{L_h}.
\end{align}
\begin{lemma}[Disturbance Tolerance of Remote MPC-CBF]
\label{lem:MPCCBF}
Consider system \eqref{eq:dynamics} with bounded disturbance $\lVert\bm{w}_k\rVert \leq \bar{w}$, safe set
$\mathcal C$ \eqref{eq:SafeSet}, and the remote MPC-CBF \eqref{eq:MPC-CBF} connected through a network with measurement-to-actuation latency \mbox{$\tau_k = \hat{\tau} + r_k \in \mathbb{N}$}. Let $h(\cdot)$ be Lipschitz continuous with constant $L_h$ and $h(\bm{x}_0)>0$. 
If, for every closed-loop state $x_k\in\mathcal C$,  the MPC-CBF is feasible and computes a $\bm{u}_k \in \mathcal{U}$ such that
\begin{align}
    &\bar{w} \leq \bar{w}_r(\hat{\bm{x}}_k,\bm{u}_k,\hat{\tau},\bar{r},l) \coloneqq \frac{\Delta^\mathrm{rem}_{k+l}(\hat{\bm{x}}_k,\bm{u}_k,\hat{\tau},l)}{G(\hat{\tau},l)} - d_u \label{eq:MPCCBFNoiseCondition}
\end{align}
for all \mbox{$l \in \{1,\dots,N\}$}, then it renders the safe set $\mathcal{C}$ forward invariant.
\end{lemma}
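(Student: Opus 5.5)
We mirror the proof of Lemma~\ref{lem:MyopiCBF} and reduce the claim to the prediction-error condition \eqref{eq:PredictionErrorCondition}. The new ingredient is a bound on the errors $\bm{e}_{k+l}=\bm{x}_{k+l}-\hat{\bm{x}}_{k+l}$ for the remote architecture. Here $\hat{\bm{x}}_{k+l}=\bm{x}_{l|k}$ comes from rolling the nominal dynamics \eqref{eq:nominalDynamics} forward from the delayed measurement $\bm{x}_{k-\hat{\tau}}$. Since $\Delta^{\mathrm{rem}}_{k+l}$ in \eqref{eq:MPCCBFPredictionError} is exactly \eqref{eq:PredErrorTolerance} evaluated at the pair $(\hat{\bm{x}}_{k+l-1},\hat{\bm{x}}_{k+l})$, it suffices to show that, for every $l\in\{1,\dots,N\}$,
\begin{align*}
\lVert\bm{e}_{k+l}\rVert+(1-\gamma)\lVert\bm{e}_{k+l-1}\rVert\le G(\hat{\tau},l)\,(\bar{w}+d_u).
\end{align*}
Condition \eqref{eq:MPCCBFNoiseCondition} rearranges to $G(\hat{\tau},l)(\bar{w}+d_u)\le\Delta^{\mathrm{rem}}_{k+l}$, because $G\ge 1$ and the $d_u$ term can be absorbed after bounding it as below.

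\textbf{Step 1: error propagation.} The true and predicted trajectories both start from $\bm{x}_{k-\hat{\tau}}$, so the initial error is zero. Over each of the $\hat{\tau}+l$ steps, the true trajectory is driven by the disturbance $\bm{w}$ and by the input actually applied. Because of the residual delay $r\le\bar{r}$, that input is a version of the buffered one shifted by at most $\bar{r}$ samples. The rate constraint in \eqref{eq:OCP-inputConst} gives $\lVert\bm{u}_{j}-\bm{u}_{j-r}\rVert\le\bar{r}\Delta u$. Lipschitz continuity of $\bm{f}$ then gives the recursion
\begin{align*}
\lVert\bm{e}_{j+1}\rVert\le L_x\lVert\bm{e}_j\rVert+L_u\bar{r}\Delta u+\bar{w}=L_x\lVert\bm{e}_j\rVert+d_u+\bar{w}.
\end{align*}
Unrolling it over $m$ steps from zero initial error gives $\lVert\bm{e}\rVert\le S_m(\bar{w}+d_u)$. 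Taking $m=\hat{\tau}+l$ and $m=\hat{\tau}+l-1$ produces exactly the two terms of $G(\hat{\tau},l)$.

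\textbf{Step 2: invariance.} With these bounds, \eqref{eq:PredictionErrorCondition} holds at every step. Hence $h(\bm{x}_{k+l})\ge(1-\gamma)h(\bm{x}_{k+l-1})$ along the true trajectory. Since $h(\bm{x}_0)>0$ and $0<\gamma\le 1$, induction over $k$ gives $h(\bm{x}_k)\ge0$ for all $k$, as in the local case. Requiring the condition for all $l$ covers the applied-input step, whichever predicted index the delayed input corresponds to.

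\textbf{Main obstacle.} The delicate step is the residual-delay term. It requires matching the index of the input actually applied at plant time $k+r_k$ with the nominal index used in the prediction, and checking that the per-step input mismatch is bounded by $\bar{r}\Delta u$. This needs the applied inputs from consecutive MPC solutions to be rate-limited, not just the inputs within one horizon. If that fails, I would instead assume the buffered sequence $\mathcal{B}_\mathrm{r}$ is itself $\Delta u$-rate-bounded and shift the inputs by telescoping. I would also check the exact power indices in $S$ (starting at $L_x^0$), so that the single-step disturbance entering last is counted with coefficient $1$.
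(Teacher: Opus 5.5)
Your proposal is correct and follows the paper's proof essentially step for step. You reduce to \eqref{eq:PredictionErrorCondition}, derive $\lVert\bm{e}_{j+1}\rVert\le L_x\lVert\bm{e}_j\rVert+\bar w+d_u$ from Lipschitz continuity plus the rate constraint, unroll from zero error at the measurement time $k-\hat\tau$ to obtain the $S_{\hat\tau+l}$ and $S_{\hat\tau+l-1}$ terms of $G(\hat\tau,l)$, and conclude invariance as in Lemma~\ref{lem:MyopiCBF}. The obstacle you flag is genuine: $\lVert\bm{u}_{k-r_k}-\bm{u}_k\rVert\le\bar r\Delta u$ needs rate-limiting across consecutive MPC solutions, not just within one horizon. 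The paper passes over this with ``we have considered the rate constraint of the MPC'', so your fallback assumption on the buffered sequence is exactly what the argument implicitly requires.
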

\begin{proof}
Due to delayed measurements, the MPC is initialized with a predicted state $\hat{\bm{x}}_k$, resulting in
\begin{align}
    \bm{e}_{k+1} = \bm{f}(\bm{x}_{k},\bm{u}_{k-r_k}) + \bm{w}_k-\bm{f}(\hat{\bm{x}}_{k},\bm{u}_{k}).
\end{align}
The error due to the uncompensated delay between the applied and predicted inputs satisfies $\lVert \bm{u}_{k-r_k}-\bm{u}_k\lVert\leq \bar{r}\Delta_u$, where we have considered the rate constraint of the MPC. Hence, bounding the prediction error gives
\begin{align}
\label{eq:PredictionErrorBoundRemote}
    \lVert \bm{e}_{k+1}\lVert \leq L_x \lVert\bm{e}_k\lVert + \bar{w} + d_u,
\qquad d_u:=L_u\bar{r}\Delta_u.
\end{align}
We can give the error bound of every step $l$ of the MPC recursively as $\lVert \bm{e}_{k+l}\lVert \leq (\bar{w}+d_u) S_{\hat{\tau}+l}$. As \eqref{eq:PredictionErrorCondition} requires \mbox{$\lVert \bm{e}_{k+l}\lVert+(1-\gamma)\lVert \bm{e}_{k+l-1}\lVert\leq \Delta_{k+l}^{\mathrm{rem}}$}, we have \mbox{$G(\hat{\tau},l) (\bar{w}+d_u) \leq \Delta^\mathrm{rem}_{k+l}(\hat{\bm{x}}_k,\hat{\tau},l)$},
which is equivalent to \eqref{eq:MPCCBFNoiseCondition}. Forward invariance follows from the same arguments as in the proof of Lemma \ref{lem:MyopiCBF}.
\end{proof}
We observe that the bound on the tolerated disturbance for the remote MPC-CBF depends on two additional factors: i) the error due to uncompensated delay $d_u$, reducing the disturbance margin directly, ii) the prediction error from compensating $\hat{\tau}$ and enforcing the condition along the MPC horizon.

\subsection{Architectural Disturbance Tolerance Comparison}
Comparing the disturbance bounds derived in Lemmas \ref{lem:MyopiCBF} and \ref{lem:MPCCBF} reveals that the remote placement yields a smaller admissible disturbance tolerance. We combine our findings in the following proposition, where we interpret the bounds with respect to an identical barrier margin at the time of safety enforcement.
\begin{proposition}[Architecture-Induced Disturbance Tolerance]
\label{prop}
Consider local and remote CBF constraints with identical available
CBF margin $\Delta>0$. Let
\begin{align*}
S_r:=\sum_{j=0}^{r-1}L_x^j,\qquad
G(\hat{\tau},l):=S_{\hat{\tau}+l}+(1-\gamma)S_{\hat{\tau}+l-1},
\end{align*}
with $S_0 = 0$ and $d_u:=L_u \bar{r}\Delta_u$. The admissible disturbance bounds
satisfy
\begin{align*}
\bar w_l (\cdot)=\Delta,
\qquad
\bar w_r(\hat{\tau},\bar{r},l)
=
\frac{\Delta}{G(\hat{\tau},l)}-d_u.
\end{align*}
Assume $\bar w_r>0$. Then:
\begin{enumerate}
\item For $\hat{\tau}=0$, $\bar{r}=0$, and $l=1$:
    \begin{align}
         \label{eq:propEquality}
    \bar w_r(0,0,1)=\bar w_l.
    \end{align}
\item For $\hat{\tau}=0$, $\bar{r}=0$, and $l>1$:
    \begin{align}
         \label{eq:propNoDelayButHorizon}
    \bar w_r(0,0,l)<\bar w_l.
    \end{align}
\item For $\hat{\tau}>0$ and $l\in \{1,\dots,N\}$:
    \begin{align}
     \label{eq:propDelayAndHorizon}
    \bar w_r(\hat{\tau},\bar{r},l)<\bar w_r(0,0,l)\leq \bar w_l.
    \end{align}
\end{enumerate}
\end{proposition}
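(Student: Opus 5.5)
The whole proposition reduces to monotonicity properties of the partial geometric sums $S_r$. Since the available CBF margin $\Delta>0$ is the same in both architectures, the comparison depends only on the factor $G(\hat{\tau},l)$ and the offset $d_u$ in $\bar w_r=\Delta/G(\hat{\tau},l)-d_u$. From Lemma~\ref{lem:MyopiCBF} we take $\bar w_l=\Delta$ directly, and from Lemma~\ref{lem:MPCCBF} we take the stated form of $\bar w_r$. The plan is to show that $G(\hat{\tau},l)\geq 1$, with equality exactly in the case $\hat\tau=0,\ l=1$, that $G$ is strictly increasing in $\hat{\tau}$, and that $d_u\ge 0$, with $d_u=0$ when $\bar r=0$.

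First I record elementary facts about $S_r$:
\begin{itemize}
\item $S_0=0$ and $S_1=L_x^0=1$.
\item $S_{r+1}-S_r=L_x^{r}$, so $S_r$ is nondecreasing in $r$.
\item Assuming $L_x>0$, which is implicit since the dynamics genuinely depend on the state, the increments are positive and $S_r$ is strictly increasing in $r$.
\end{itemize}
I also use $0\le 1-\gamma<1$ and $\bar r\geq0$, so $d_u=L_u\bar r\Delta_u\geq 0$, and $d_u=0$ whenever $\bar r=0$.

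For item 1, set $\hat{\tau}=0$, $l=1$. Then $G(0,1)=S_1+(1-\gamma)S_0=1$ and $d_u=0$, so $\bar w_r(0,0,1)=\Delta=\bar w_l$.

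For item 2, take $l>1$. Then $G(0,l)=S_l+(1-\gamma)S_{l-1}\geq S_l>S_1=1$, using strict monotonicity of $S_r$ together with $(1-\gamma)S_{l-1}\geq0$. Since $d_u=0$, this gives $\bar w_r(0,0,l)=\Delta/G(0,l)<\Delta$.

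For item 3, I compare with $\hat\tau=0$ at the same $l$. Both $S_{\hat{\tau}+l}>S_l$ and $S_{\hat{\tau}+l-1}\geq S_{l-1}$ hold, the former strictly because $\hat\tau\geq1$. Hence $G(\hat{\tau},l)>G(0,l)>0$, and therefore $\Delta/G(\hat{\tau},l)<\Delta/G(0,l)$. Subtracting $d_u\geq0$ only decreases the left-hand side, which yields $\bar w_r(\hat{\tau},\bar r,l)<\bar w_r(0,0,l)$. The right inequality $\bar w_r(0,0,l)\le\bar w_l$ follows from items 1 and 2. The assumption $\bar w_r>0$ is not needed for the inequalities themselves; it only ensures the bounds are meaningful disturbance levels.

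The main obstacle is the strictness in items 2 and 3, which hinges on $L_x>0$ so that $S_r$ is strictly increasing. If $L_x=0$ were allowed, $S_r=1$ for all $r\ge1$, and strictness would have to come from $(1-\gamma)S_{l-1}>0$, i.e.\ from $\gamma<1$, or from $d_u>0$. I would therefore state explicitly that $L_x>0$ is assumed, which holds for any nontrivial Lipschitz constant. As a fallback, I would note that with $\gamma<1$ the term $(1-\gamma)S_{l-1}\geq 1-\gamma>0$ already gives strictness in item 2. In item 3, however, when $L_x=0$ and $\gamma=1$ the strict inequality would then rest on $d_u>0$, i.e.\ on $\bar r>0$.
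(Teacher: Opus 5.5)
Your proposal is correct and follows essentially the same route as the paper: $S_0=0$ and $S_1=1$ give $G(0,1)=1$, monotonicity of $S_r$ gives $G(0,l)>1$ for $l>1$ and $G(\hat\tau,l)>G(0,l)$ for $\hat\tau>0$, and $d_u\ge 0$ can only lower $\bar w_r$; your explicit assumption $L_x>0$ is exactly what the paper uses implicitly when it asserts $S_l>1$ and that ``additional positive terms'' enter the sums. One small correction to your fallback remark: if $L_x=0$, then for $l\ge 2$ you get $G(\hat\tau,l)=G(0,l)=2-\gamma$ even when $\gamma<1$, so strictness in item 3 rests on $d_u>0$ regardless of $\gamma$, not only when $\gamma=1$.
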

\begin{proof}
For $\hat{\tau}=0$ and $l=1$, we have $S_1=1$ and $S_0=0$,
hence $G(0,1)=1$. If also $\bar{r}=0$, then $d_u=0$ and
$\bar w_r=\Delta=\bar w_l$.\\
For $l>1$, $S_l>1$ and $S_{l-1}\geq 1$, hence
$G(0,l)>1$, which implies $\bar w_r(0,0,l)<\Delta$.\\
If $\hat{\tau}>0$, then $G(\hat{\tau},l)>G(0,l)$ because additional positive
terms enter the sums $S_{\hat{\tau}+l}$ and $S_{\hat{\tau}+l-1}$. If
$\bar{r}>0$, then $d_u>0$ directly subtracts from the admissible
disturbance bound. Therefore, both compensated, known delay $\hat{\tau}$ and an uncompensated
delay residual $r_k$ reduce the remote disturbance tolerance.
\end{proof}
Proposition \ref{prop} formalizes the architecture-induced robustness loss of the remote placement. Remote MPC-CBF accumulates prediction errors through delay compensation and horizon propagation, so its admissible disturbance bound decreases with $\hat\tau$, $\bar{r}$, and $l$. Local CBFs, therefore, offer greater disturbance tolerance but remain myopic and may react late, leading to deadlocks or degraded performance. 
Remote MPC-CBF provides anticipatory behavior through the performance objective, but only under tighter admissible disturbances. This motivates the subsequent combined architecture.
\begin{table*}[t]
\caption{Average Results for 50 runs (L - Local CBF, R - Remote MPC-CBF, C - Combined Architecture)}
\label{tab:simulation_result}
\centering
\setlength{\tabcolsep}{4.8pt}
\footnotesize
\begin{tabular}{c|ccc|ccc|ccc|ccc|ccc|ccc}
        \toprule
        \multirow{2}{*}{$\bm{\tau =\hat{\tau} + \bar{r}}$}
        & \multicolumn{3}{c|}{\textbf{Reach Rate}}
        & \multicolumn{3}{c|}{\textbf{Safe Rate}}
        & \multicolumn{3}{c|}{\textbf{Median }$\bm{d_{\min}}$ [mm]}
        & \multicolumn{3}{c|}{\textbf{Avg. }$\bm{t_{\mathrm{reach}}}$ [s]}
        & \multicolumn{3}{c}{\textbf{Avg. Stage Cost}}
        & \multicolumn{3}{c}{$\overline{\delta\bm{u}}$ [Nm]}
        \\
        \cmidrule(lr){2-4}
        \cmidrule(lr){5-7}
        \cmidrule(lr){8-10}
        \cmidrule(lr){11-13}
        \cmidrule(lr){14-16}
        \cmidrule(lr){17-19}
        & L & R & C
        & L & R & C
        & L & R & C
        & L & R & C
        & L & R & C
        & L & R & C\\
        \midrule
        I: $\tau = 3 + 0$  &100\%  &100\%  &100\% &98\%  &92\%  &\textbf{100}\%  &0.02  &1.64  &\textbf{2.07}  &8.28  &7.98  &\textbf{7.94}  &14.51  &5.47  &\textbf{5.44} &3.73  &\textbf{0.22}  &0.23  \\
        II: $\tau = 2 + 1$ &98\%  &98\%  &98\%  &100\%  &90\%  &100\%  &0.03  &4.88  &\textbf{4.89}  &7.21  &7.12  &\textbf{7.10}  &20.41  &7.41  &\textbf{7.39}  &3.96  &\textbf{0.46}  &0.59  \\
        III: $\tau = 6 + 0$ &96\%  &96\%  &96\%  &96\%  &72\%  &\textbf{100}\%  &0.02  &\textbf{1.13}  &0.99  &\textbf{8.58}  &8.73  &8.71  &14.26  &\textbf{5.46}  &5.59  &3.95  &\textbf{0.29}  &0.62  \\
        IV: $\tau = 5 + 1$ &96\%  &98\%  &98\% &100\%  &64\%  &100\%  &0.02  &3.09  &\textbf{3.61}  &8.25  &\textbf{7.69}  &7.70  &18.90  &\textbf{6.96}  &8.31 & 4.08  &\textbf{0.49}  & 1.17  \\
        \bottomrule
    \end{tabular}
\end{table*}
\section{Combined Architecture}
\label{sec:CombinedArchitecture}
While both local and remote CBF placements ensure safety under suitable disturbance bounds, they exhibit complementary properties: local CBFs provide higher disturbance tolerance, whereas MPC-CBF improves performance through predictive planning. We therefore propose a combined architecture that leverages both mechanisms for a known disturbance bound $\bar{w}$ and delay $\tau$.
To this end, we robustify the local CBF by tightening the barrier condition as in \cite{Liu.2026}:

\begin{minipage}{0.98\linewidth}
\vspace{0.8ex}
    \hrule
    \vspace{0.8ex}
    \textbf{Robust Local CBF:}
    \vspace{1.2ex}
            \begin{align}
            \label{eq:RDCBF} 
            \bm{u}_k& = \argmin_{\bm{u}} \quad\frac{1}{2}\lVert \bm{u}-\bm{u}_k^\mathrm{nom} \rVert^2            \\
           \textrm{s.t. } h(\hat{\bm{x}}_{k+1})&-h(\bm{x}_k) \geq -\gamma h(\bm{x}_k)+L_h \bar{w},\, \bm{u} \in \mathcal{U}\nonumber
            \end{align}
    \vspace{-2.5ex}
    \hrule
    \vspace{1.5ex}
\end{minipage}
The resulting feasible set of the robust local CBF is
\begin{align}
    \label{eq:locCBFFeasibleSet}
    \mathcal{U}_\mathrm{CBF}(\bm{x}_k)
    \coloneqq
    \bigl\{\bm{u}\in\mathcal{U}\;:\;&
    h(\bm{f}(\bm{x}_k,\bm{u})) \notag\\
    &\geq (1-\gamma)h(\bm{x}_k)+L_h\bar{w}
    \bigr\}.
\end{align}
 As long as $\mathcal{U}_\mathrm{CBF} \neq \emptyset$ and $\lVert\bm{w}_k\lVert\leq \bar{w}$, the local CBF is feasible and thus guarantees forward invariance of the safe set. Notably, the feasibility of the local problem is not directly tied to that of the remote side, as the local CBF may be feasible even if the remote MPC is not. Similarly, the feasibility of the remote MPC for a predicted state $\hat{\bm{x}}_k$ does not guarantee
the feasibility of the local CBF for the true state $\bm{x}_k$ due to disturbances. Therefore, the remote MPC-CBF should not only be tightened with the propagating error margin \eqref{eq:MPCCBFNoiseCondition}, but its allowable input set should also be tightened as $\mathcal{U}_\mathrm{nom}\coloneqq\mathcal{U}\ominus\Delta U$. Thus, additional input margin is reserved for the local CBF. This procedure parallels tube-MPC.
Since the feasibility of the local CBF is sufficient for safety and the CBFs propagated along the horizon may significantly influence the feasibility of the remote MPC-CBF, we augment it with slack variables $\delta_i$.
\begin{subequations}
\label{eq:RMPC-CBF} \nonumber
\begin{minipage}{\linewidth}
\vspace{0.8ex}
\hrule
\vspace{0.8ex}
\textbf{Robust Remote MPC-CBF:}
\vspace{1.2ex}
        \begin{align}
         J = &\min_{\bm{u}_{0:N-1|k}} p(\bm{x_{N|k}})+\sum_{i=0}^{N-1}\left(q(\bm{x}_{i|k},\bm{u}_{i|k})+\rho \delta_i^2\right)\\
         &\textrm{Constraints \eqref{eq:OCP-dynConst}-\eqref{eq:OCP-inputConst} with additional constraints} \nonumber\\
        &h(\bm{x}_{i+1|k}) \geq 
            (1-\gamma) h(\bm{x}_{i|k})+  \nonumber \\
        & \qquad \qquad \quad+L_h G(\hat{\tau},i+1)(\bar{w}+d_u) -\delta_i,
        \label{eq:rmpc-cbf-barrier}\\ 
        &\delta_i \geq 0, \quad \rho \geq 0
        \label{eq:rmpc-cbf-slackconstraint}\\ 
        &\bm{u}_{i|k}\in\mathcal{U}_\mathrm{nom},\\
        & \forall i=\{0,\ldots,N-1\} \nonumber
        \end{align}
    \vspace{-2.5ex}
    \hrule
    \vspace{1.5ex}
    \end{minipage}
\end{subequations}
The required correction set $\Delta U$ depends on the difference in disturbance tolerance \eqref{eq:myopicNoiseCondition} and \eqref{eq:MPCCBFNoiseCondition} such that if $\bm{u}_k^\mathrm{nom} \in \mathcal{U} \ominus \Delta U$ then $\bm{u}_k-\bm{u}_k^\mathrm{nom} \in \Delta U$ and therefore $\bm{u}_k\in\mathcal{U}$. Its analytical derivation 
is left for future work.\\
The resulting architecture imposes a hierarchy: the remote MPC-CBF improves performance and provides a predictive safety certificate when its tightened constraints are feasible with \(\delta_i=0\). If the remote CBF constraints are relaxed, i.e. $\delta_i>0$, or the remote problem fails, safety is delegated to the local robust CBF. Thus, forward invariance is guaranteed under the local feasibility condition $\mathcal{U}_\mathrm{CBF} \neq \emptyset$. The tightened input set \(\mathcal U\ominus\Delta\mathcal U\) reserves control authority for this local correction. Uncompensated delay, e.g. in the downlink, affects the reference input supplied to the local filter, but not the timing of the safety certification itself.
Note that in the joint architecture, the predictor must account for the local CBF corrections to maintain consistency between predicted and applied inputs.

\section{Simulation Results}
\label{sec:sim}
\begin{figure}[b]
    \centering
    \includegraphics[width=.7\linewidth]{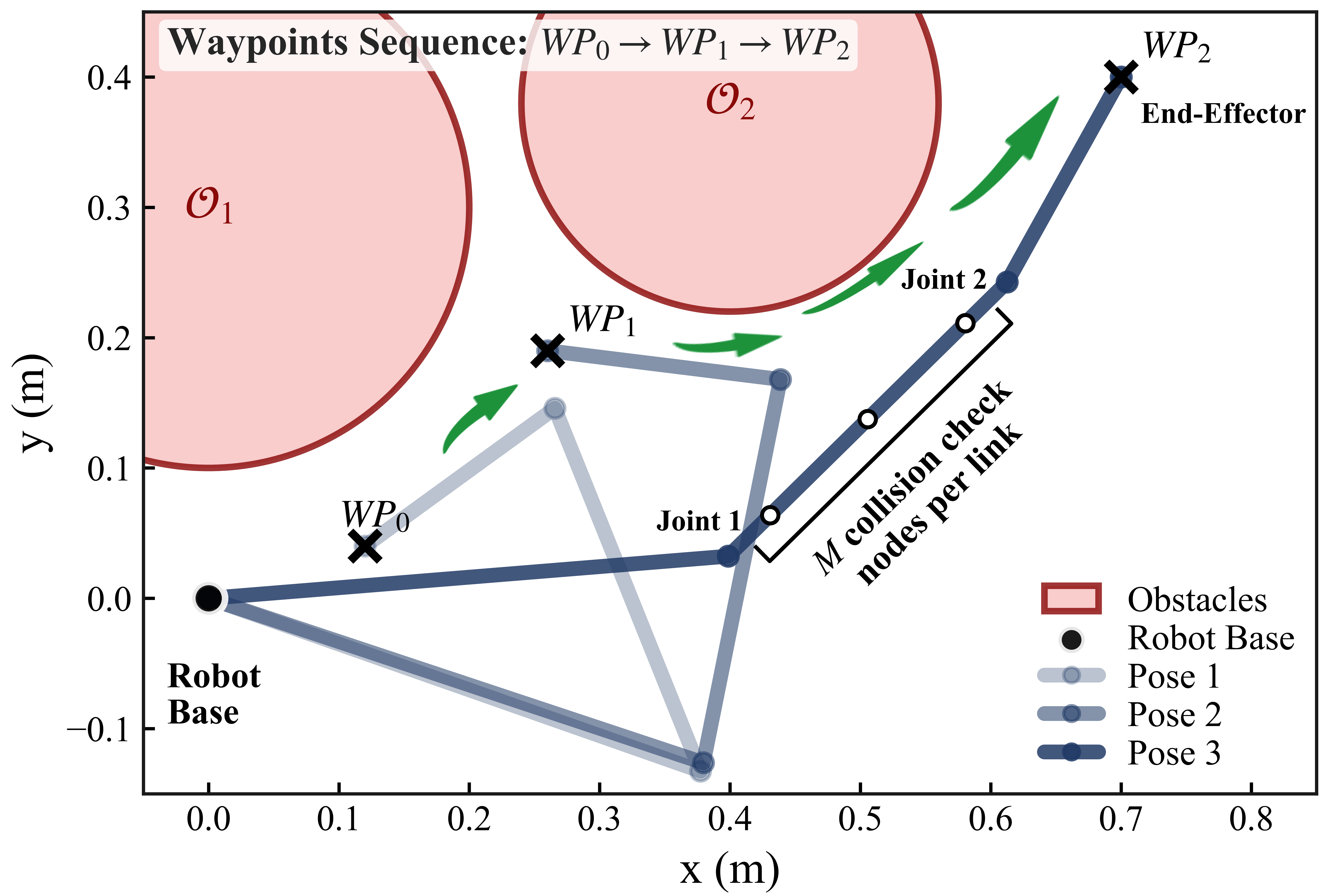}
    \caption{Considered planar 3-DoF collision avoidance task.}
    \label{fig:SimulationSetup}
\end{figure}

To illustrate the consequences of CBF placement in NCS, we consider a collision-avoidance task for a 3-DoF planar robot (cf. Fig.~\ref{fig:SimulationSetup}). We consider standard manipulator dynamics
\begin{align}
    \label{eq:SimDynamics}
    \bm{M}(\bm{q})\ddot{\bm{q}} + \bm{C}(\bm{q},\dot{\bm{q}})\dot{\bm{q}} +\bm{G}(\bm{q})= \bm{u} + \bm{\tau}_d
\end{align}
where $\bm{q} \in \mathbb{R}^3$ denotes the joint angles,  and $\bm{\tau}_d \in \mathbb{R}^3$ time-varying disturbances.
The robot's task is to track two waypoints in sequence while avoiding two circular obstacles. 
To ensure full-body safety, we discretize each link into three points. The corresponding barrier function of the $p$-th sample point of link $j$ with respect to obstacle $i$ is defined as a quadratic distance constraint
\begin{align}
    \label{eq:SimBarrier}
    h_{jp,i}(\bm{x}_k) = \lVert \bm{p}_{jp}(\bm{x}_k)- \bm{\mathcal{O}}_i\rVert^2 - (r_i+\epsilon)^2
\end{align}
with safety margin $\epsilon=0.005$m. The control is split into two components. A high-level MPC acts as a remote planner, computing a desired trajectory with $T_\mathrm{MPC}=20$Hz. Locally, a PD+-controller \cite{Murray.2017} tracks the desired path with a sampling rate of $200$Hz. The control barrier function acts on the planned trajectory, either as a constraint on the remote MPC or as a myopic filter locally before the PD+-controller. The simulations use the nominal CBF implementations to evaluate how the architectures behave under model mismatch and disturbances without conservative robust tightening. The plant is simulated with the full nonlinear rigid-body dynamics, whereas the MPC and CBF controllers use a simplified acceleration-level prediction model. Hence, the simulation includes model mismatch in addition to bounded time-varying disturbances. The latter are zero-mean Gaussian with $\sigma = 0.01$Nm, bounded by $\lVert\bm{\tau}_d\lVert \leq 0.02 \textrm{Nm}$.
For the MPC we consider a quadratic objective with respect to the way points as positional references, as well as a horizon $N=30$. Additionally, we constrain all joint accelerations to $\lvert\ddot{q}_j \rvert \leq20 \mathrm{rad/s^2}$.\footnote{The code is available at \href{https://github.com/TUM-ITR/WhereToPutSafety.git}{\texttt{github.com/TUM-ITR/WhereToPutSafety}}.}\\
First, we compare the three proposed architectures of this work, with respect to safety, median minimum safety margin $d_{\mathrm{min}}$, average completion time $t_\mathrm{reach}$, average stage costs as considered in the MPC, as well as average maximum torque command variation $\overline{\delta\bm{u}} =\lVert{\bm{u}_{k+1} -\bm{u}_k\lVert}_{\infty}$. Results over 50 runs for different delay cases I: $\hat{\tau}=3,\bar{r}=0$, II: $\hat{\tau}=2,\bar{r}=1$, III: $\hat{\tau}=6,\bar{r}=0$, \mbox{IV: $\hat{\tau}=5, \bar{r}=1$}  are summarized in Table \ref{tab:simulation_result}. All delays are reported in MPC sampling steps. The two pairs I/II and III/IV separate the effect of compensated delay $\hat{\tau}$ from the residual delay bound $\bar{r}$: increasing $\hat{\tau}$ primarily amplifies prediction error, while a nonzero $\bar{r}$ adds input-mismatch error.
We observe that the safe rate, i.e. runs with \mbox{$\min h_i(\bm{x}_k)> -10^{-5}\,\, \forall i,k$}, is fairly constant for the local CBF, while it drops significantly for the remote CBF with higher delay and thus more amplified errors. The change from 4 unsafe runs in scenario I to 18 unsafe runs in scenario IV clearly highlights the reduced disturbance tolerance of predictive safety under delay. However, the myopic nature of the local CBF can also be inferred, as the minimum distance to the safety boundary is two orders of magnitude smaller than for the predictive methods, even in the low delay cases, highlighting late intervention. The larger command variation $\overline{\delta\bm{u}}$ for the local CBF further indicates that local safety is enforced through later, more abrupt corrections.
 Regarding performance, the average stage cost is consistently much higher for the local CBF, whereas the remote and combined architectures maintain lower costs through predictive planning. 
 As intended by design, the combined architecture consistently achieves good performance while maintaining safety at all runs.
Fig. \ref{fig:StateSpacePlots} shows the average location and the hull of all trajectories from scenarios I and IV for the local CBF architecture and the remote MPC-CBF architecture.
Additionally, Fig. \ref{fig:statisticsOverDelay} depicts the median and the min-max envelope of the minimum distance to obstacle $2$ over $10$ runs for the three architectures over different constant delays $\hat{\tau}$. The remote MPC-CBF has a reasonable median clearance, yet its lower envelope becomes negative for larger delays, indicating unsafe runs caused by amplified prediction errors. In contrast, the local CBF remains close to the boundary with little variation, indicating myopic but safe behavior. The combined architecture maintains a positive clearance over all tested delays while providing a larger median distance than the local CBF.\\
\begin{figure}[t]
\centering
\includegraphics[width=\columnwidth]{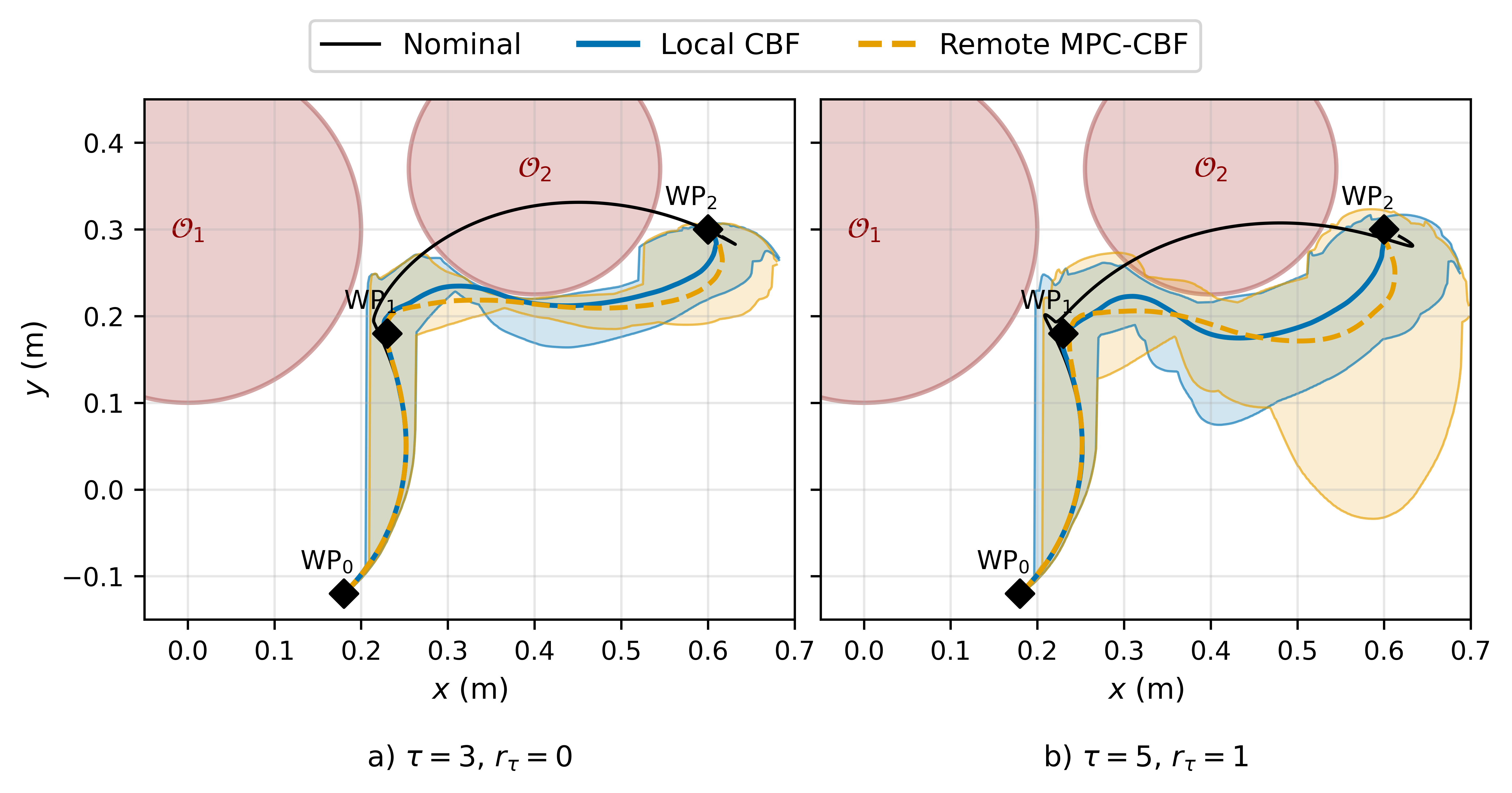}
\caption{End-effector paths over 50 runs for $\tau = 3 + 0$ and $\tau = 5 + 1$. Solid lines denote spatially averaged trajectories and shaded regions the trajectory hull.}
\label{fig:StateSpacePlots}
\end{figure}
Overall, the results confirm the qualitative trade-off between robustness and performance induced by CBF placement.

\section{Conclusion}
\label{sec:conclusion}
This work shows that the placement of CBFs in NCS fundamentally affects both safety and performance. We derived disturbance-tolerance bounds for local and remote CBF implementations and showed that predictive safety under delay yields stricter admissible disturbance levels.
Simulations confirm that local CBFs provide superior robustness due to access to fresh state information, while remote MPC-CBF improves performance through anticipatory behavior, yet is more sensitive to disturbances. The proposed combined architecture provides a practical compromise in uncertain networked environments.
Future work includes analyzing the implications of safety filter placement on the feasibility of the optimization problems, more complex disturbances, and realistic network settings, as well as validating the proposed architecture experimentally.

\section*{DECLARATION OF GENERATIVE AI AND AI-ASSISTED TECHNOLOGIES IN THE WRITING PROCESS}
During the preparation of this work, the authors utilized ChatGPT and Grammarly to enhance the writing, simulation, and technical content. After using these tools, the authors reviewed and edited the publication's content as needed and take full responsibility for it.
\appendices
\begin{figure}[t]
    \centering
    \includegraphics[width=.8\linewidth]{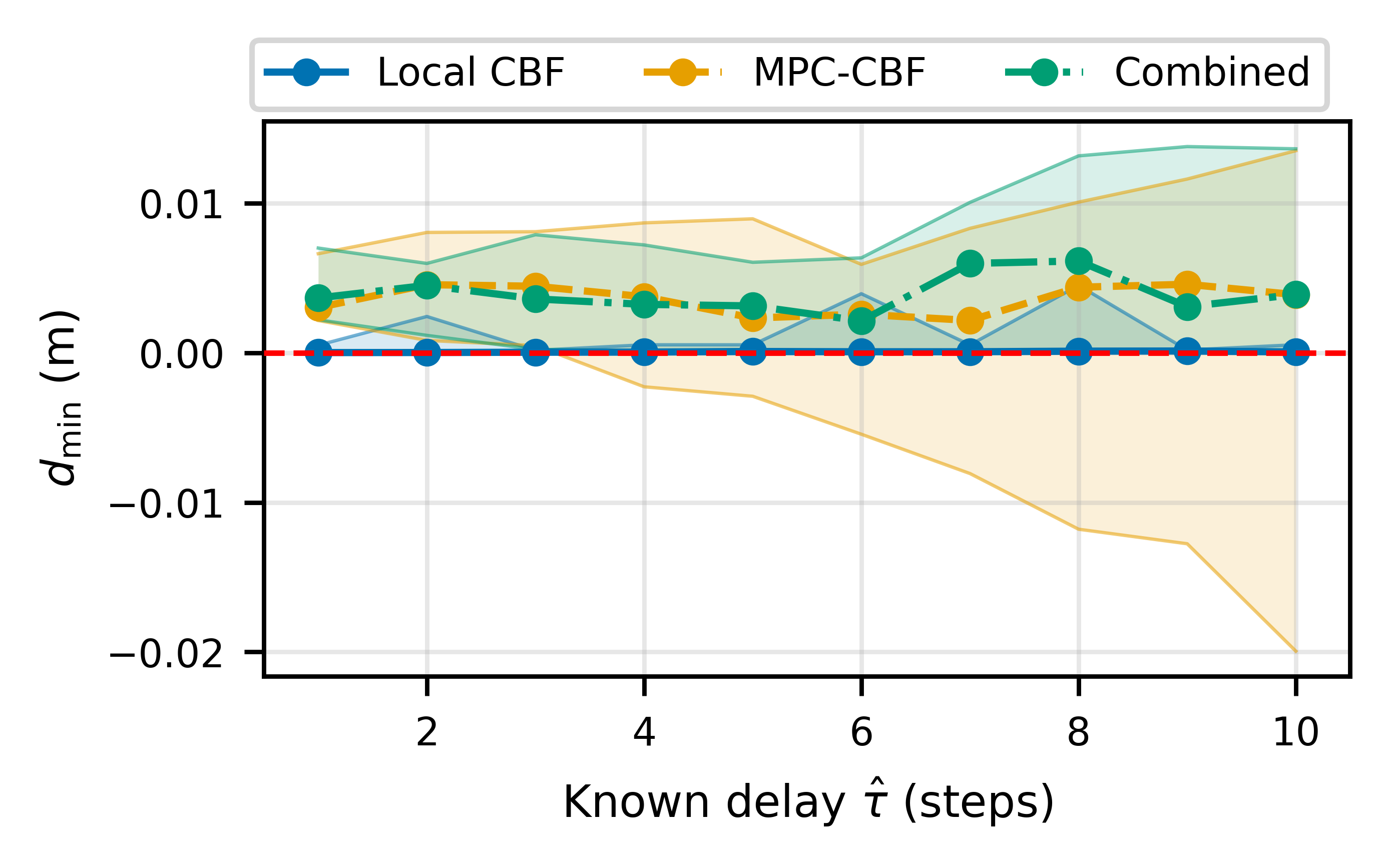}
    \caption{Minimum distance to the safety boundary of obstacle $2$ over a range of known delays $\hat{\tau}$ with $r_k=0$. Lines show the median value, shaded regions show the min-max envelope over 10 runs.}
    \label{fig:statisticsOverDelay}
\end{figure}
\printbibliography

\end{document}